\newtheorem{definition}{Definition}
\newtheorem{conjecture}{Conjecture}
\newtheorem{corollary}{Corollary}
\newtheorem{proposition}{Proposition}
\newtheorem{theorem}{Theorem}
\newcommand{\GG}{G\"{o}del's Universe}
\newcommand{\MM}{Min\-kow\-ski spacetime}
\newcommand{\SYP}{\otimes_s}
\newcommand{\AFF}{\mathcal{A}}
\newcommand{\HOM}{\mathcal{H}}
\newcommand{\KIL}{\mathcal{K}}
\newcommand{\CON}{\mathcal{C}}
\begin{document}

\title{Tensor Generalizations of\\ Affine Symmetry Vectors}
\author{Samuel A. Cook%
\footnote{Email: \texttt{scook@math.oregonstate.edu}}
~and
Tevian Dray%
\footnote{Email: \texttt{tevian@math.oregonstate.edu}}
\\
Department of Mathematics\\
Oregon State University\\
Corvallis, OR  97331
}
\date{\today}

\maketitle

\begin{abstract}
A definition is suggested for \emph{affine symmetry tensors}, which generalize
the notion of affine vectors in the same way that (conformal) Killing tensors
generalize (conformal) Killing vectors.  An identity for these tensors is
proved, which gives the second derivative of the tensor in terms of the
curvature tensor, generalizing a well-known identity for affine
vectors.  Additionally, the definition leads to a good definition of
\emph{homothetic tensors}.  The inclusion relations between these types of
tensors are exhibited.  The relationship between affine symmetry tensors and
solutions to the equation of geodesic deviation is clarified, again extending
known results about Killing tensors.
\end{abstract}

%\newpage

\section{Introduction}
 
An \emph{affine vector field} on a spacetime is a vector field whose associated
diffeomorphism maps geodesics to geodesics and preserves the affine parameter.
Affine vectors are also \emph{Jacobi fields}; that is, they are solutions to
the equation of geodesic deviation.  The most well-known affine vectors are
\emph{Killing vectors}, which give directions along which the spacetime metric
is invariant.  Less well-known are \emph{homothetic vectors}, which give
directions along which the metric is scaled by a constant.

In this paper, we propose definitions for \emph{affine tensor fields} and
\emph{homothetic tensor fields}, and show that their properties are direct
generalizations of the corresponding properties of affine vector fields and
homothetic vector fields, respectively.

We introduce affine tensors in Section~\ref{ATDefs}, and in
Section~\ref{ATMMGG} we review what is known about the affine algebras in
\MM\ and \GG, and show that there are nontrivial examples of affine tensors of
valence~2 in both spacetimes.

In Section~\ref{ATID}, we recall an important identity satisfied by affine
vectors, which relates the second derivative of the vector to the vector
itself by way of the curvature tensor.  We then give a generalization of this
identity to affine tensors of higher valence, which also generalizes an
identity given by Collinson~\cite{cC71} for Killing tensors of valence~2.

In Section~\ref{HTDefs}, we propose a definition of \emph{homothetic tensor},
and show that the inclusion relationships between Killing, homothetic, and
affine vectors can all be generalized to the corresponding tensors of higher
valence.  In section~\ref{HTMMGG} we exhibit nontrivial examples of homothetic
tensors of valence~2 in \MM\ and \GG.

In Section~\ref{EGD}, we show that by repeated geodesic contraction, Jacobi
fields can be systematically constructed from affine tensor fields.  This
extends and clarifies the work of Caviglia, \emph{et al.}~\cite{gCcZ82.2}.

We conclude the paper in Section~\ref{Discussion} with a brief discussion of
the Schouten bracket, and how it might be used to impose a graded Lie
structure on the set of all affine tensors on a spacetime.

%\newpage

\section{Affine Tensors}
\label{Affine}

\subsection{Definitions}
\label{ATDefs}

We begin with the definitions of Killing, homothetic, and affine vector
fields.

\begin{definition}
\label{Vecs}
\hfill\break\\[-10pt]
\null\hspace{24pt}
\begin{minipage}{5in}
   i) $X$ is a \emph{Killing vector field} iff $\nabla_{(a}X_{b)}=0$;\\
\smallskip
   ii) $X$ is a \emph{homothetic vector field} iff
   $\nabla_{(a}X_{b)}=cg_{ab}$, with $c$ constant;\\
\smallskip
   iii) $X$ is an \emph{affine vector field} iff
   $\nabla_a\nabla_{(b}X_{c)}=0$;
\end{minipage}\\
where in each case parentheses denote symmetrization of the enclosed indices,
and where $\nabla$ is the metric connection.
\end{definition}

The set of all affine vector fields on a spacetime ($M$,$g_{ab}$), here
denoted $\AFF(M)$, forms a Lie algebra, called the affine algebra, under the
usual Lie bracket.  The sets of Killing vector fields, $\KIL(M)$, and of
homothetic vector fields, $\HOM(M)$, are Lie subalgebras of $\AFF(M)$.
Moreover, $\KIL(M)\subset\HOM(M)\subset\AFF(M)$.

Killing tensors are a generalization of Killing vectors.  As with Killing
vectors, Killing tensors generate first integrals of the geodesic equations by
repeated contraction with geodesic tangents~\cite{lpE97}.  Interest in Killing
tensors surged in 1970 after Walker and Penrose~\cite{mWrP70} exhibited the
Killing tensor which gives the fourth constant of the motion in the Kerr
spacetime~\cite{bC1968}.  Killing tensors are defined by generalizing
definition~\ref{Vecs}($i$).

\begin{definition}
A symmetric tensor field $X_{a_1\cdots a_n}=X_{(a_1\cdots a_n)}$ is a
\emph{Killing tensor} if and only if
\begin{equation}
   \nabla_{(c}X_{a_1\cdots a_n)}=0.
\end{equation}
\end{definition}

We propose a similar generalization from affine vectors to \emph{affine
symmetry tensors}, or, more loosely, just \emph{affine tensors}.
\footnote{This terminology, though natural, has a drawback in that the term
\emph{affine tensor} is used elsewhere, notably in mechanics.}
This is done in a straightforward manner, by generalizing
definition~\ref{Vecs}($iii$).

\begin{definition}
\label{AffTenDef}
A symmetric tensor field $X_{a_1\cdots a_n}=X_{(a_1\cdots a_n)}$ is an
\emph{affine tensor} if and only if
\begin{equation}
   \nabla_b\nabla_{(c}X_{a_1...a_n)}=0.
\label{AffTen1}
\end{equation}
\end{definition}

We remark that definition~(\ref{AffTenDef}) implies that the
symmetrized covariant derivative of an affine tensor field is itself a
covariantly constant tensor field.  Extending the usual notation~\cite{gH04},
let $h_{a_1\cdots a_{n+1}}=\nabla_{(a_1}X_ {a_2\cdots a_{n+1})}$.  By the
Ricci identities and the fact that $X$ is totally symmetric, we find
\begin{equation}
   h^{\phantom{}}_{(a_1\cdots a_n|p|}R^p_{\phantom{p}a_{n+1})rs}=0.
\end{equation}

\subsection{Examples}
\label{ATMMGG}

As the concept of affine symmetry tensor is new, it is an interesting question
whether any such tensor fields exist.  The existence of valence~2 affine
tensors was investigated in~\cite{sC09}, and nontrivial examples were found in
\MM\ and \GG.

The affine structure of \MM\ is well known~\cite{gH04}.  There are 20 affine
vectors in \MM, which can be chosen so that ten are Killing vectors.
Additionally, there are~70 valence~2 affine tensor fields, which can be chosen
so that~50 are Killing tensors.  All of these tensors are reducible, that is,
expressible as symmetrized outer products of affine or Killing vectors,
respectively.

The line element for \GG\ can be written in the form
\begin{equation}
\label{LinElt}
   ds^2=dt^2-dx^2+\frac{1}{2}e^{2x}dy^2-dz^2+2e^xdt\,dy.
\end{equation}
In \GG, there are six independent affine vectors, which can be chosen so that
five are Killing vectors.  Additionally, there are 17 valence~2 affine tensor
fields.  These can be chosen so that 15 are Killing tensors, all of which are
reducible.  (The metric is a Killing tensor, and is indeed
reducible~\cite{sC09}.)  In the coordinates defined by (\ref{LinElt}), the
remaining two, the proper affine tensors of valence~2, can be written as
\begin{align}
   S_{ab} & =z\zeta_a\zeta_b, \textup{and}\\
   T_{ab} & =z(g_{ab}-\zeta_a\zeta_b),
\end{align}
where $\zeta_a=z,_a$ is covariantly constant in \GG, and $z\zeta_a$ is the
proper affine vector.  $T_{ab}$ is not reducible, but $S_{ab}$ clearly is,
being the symmetric product of the proper affine vector and the covariantly
constant vector in \GG.  In general, if there is a valence~$m$ proper affine
tensor and a valence~$n$ covariantly constant tensor, then the symmetric
product of the two will be a proper affine tensor of valence~$m+n$.  Thus, for
example, metrics which possess both covariantly constant and proper affine
vector fields, such as the Einstein Static Universe \cite{gH04}, will also
possess proper valence~2 affine tensor fields.

\subsection{Characterizations of Affine Tensors}
\label{ATID}

An important property of affine vector fields, which is known to be equivalent
to Definition~\ref{Vecs}($iii$), is that their second derivatives can be
related directly to the vector itself using the curvature of the spacetime.
If $X$ is an affine vector field, then
\begin{equation}
    \nabla_r\nabla_sX^a=R^a_{\phantom{a}srp}X^p.
\label{AffCurv}
\end{equation}
This is a well-known property of affine vectors, and is nicely discussed by
Hall~\cite{gH04}.  A similar expression for for higher valence tensors, 
\[
   \nabla_r\nabla_sX_{a_1\cdots a_n}=R^p_{\phantom{p}rs(a_1}X_{a_2\cdots a_n)p},
\]
fails to characterize affine tensors, since, for example, there are affine
tensor fields in \MM\ which have components that are quadratic in the standard
rectangular coordinates.  In this section, we give some identities relating
the second derivatives of affine tensor fields to the curvature of the
underlying spacetime.  In Theorem~\ref{ATID2} below, we present an identity
involving the unsymmetrized second derivative which can be used to
characterize affine tensor fields, and which generalizes~(\ref{AffCurv}).

\begin{theorem}
If $X$ is an affine tensor field, then
\begin{align}
\label{ATID1}
   \nabla_r\nabla_sX_{a_1\cdots a_n}-\frac{n(n-1)}{2}
	& \nabla_{(a_1}\nabla_{a_2}X_{a_3\cdots  a_n)rs}\\\notag
   &= nR^p_{\phantom{p}rs(a_1}X^{\phantom{p}}_{a_2\cdots a_n)p}
	- n(n-1)R^p_{\phantom{p}(a_1a_2|(r}X^{\phantom{p}}_{s)|a_3\cdots a_n)p},
\end{align}
where the vertical bars excuse the enclosed indices from symmetrization.
\label{thm1}
\end{theorem}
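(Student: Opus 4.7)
The plan is to combine the defining affine relation with the Ricci identity, and then reorganise the resulting curvature contributions using the symmetries of the Riemann tensor. The starting observation is that $h_{ca_1\cdots a_n}:=\nabla_{(c}X_{a_1\cdots a_n)}$ is a fully symmetric, covariantly constant tensor by Definition~\ref{AffTenDef}. Expanding $h$ explicitly and writing $\nabla_r h=0$ as an un-symmetrised tensor equation produces the working identity
\[
   \nabla_r\nabla_s X_{a_1\cdots a_n} = -n\,\nabla_r\nabla_{(a_1}X_{a_2\cdots a_n)s}.
\]
This moves the inner-derivative index $s$ into the tensor-index symmetrisation.

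The next step is to commute $\nabla_r$ past $\nabla_{(a_1}$ on the right-hand side via the Ricci identity, producing $-n\,\nabla_{(a_1}\nabla_r X_{a_2\cdots a_n)s}$ together with a symmetrised curvature commutator. Applying the affine condition a second time, now in the form $\nabla_{a_1}\nabla_{(r}X_{s a_2\cdots a_n)}=0$ in which the inner parenthesis symmetrises the $n+1$ indices $r,s,a_2,\ldots,a_n$, and then symmetrising the resulting equation over the remaining $(a_1,\ldots,a_n)$, yields the companion relation
\[
   \nabla_{(a_1}\nabla_r X_{a_2\cdots a_n)s} + \nabla_{(a_1}\nabla_s X_{a_2\cdots a_n)r} = -(n-1)\,\nabla_{(a_1}\nabla_{a_2}X_{a_3\cdots a_n)rs}.
\]
This pins down only the $(r,s)$-symmetric part of the inner-derivative quantity; its antisymmetric part will come from the Ricci commutators.

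To extract the theorem, I would write the working identity as above and again with $r\leftrightarrow s$, and add the two. The inner-derivative terms collapse via the companion relation, and using $\nabla_r\nabla_s+\nabla_s\nabla_r=2\nabla_r\nabla_s-[\nabla_r,\nabla_s]$ this yields
\begin{align*}
   2\nabla_r\nabla_s X_{a_1\cdots a_n} - [\nabla_r,\nabla_s]X_{a_1\cdots a_n}
   &= n(n-1)\,\nabla_{(a_1}\nabla_{a_2}X_{a_3\cdots a_n)rs} \\
   &\quad - n\,\mathrm{Sym}_{(a_1\cdots a_n)}\bigl([\nabla_r,\nabla_{a_1}]X_{a_2\cdots a_n s} + [\nabla_s,\nabla_{a_1}]X_{a_2\cdots a_n r}\bigr).
\end{align*}
After dividing by $2$, the left-hand side is precisely the left-hand side of the theorem.

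The main obstacle is the final curvature bookkeeping. Expanding every commutator via the Ricci identity yields two kinds of contributions. The ``single-$a$'' terms, involving one $a$-index on the Riemann tensor, comprise $R^p_{\phantom{p}a_irs}$, $R^p_{\phantom{p}sra_i}$ and $R^p_{\phantom{p}rsa_i}$ in various combinations, and collapse to $2n\,R^p_{\phantom{p}rs(a_1}X_{a_2\cdots a_n)p}$ only after applying the first Bianchi identity $R^p_{\phantom{p}abc}+R^p_{\phantom{p}bca}+R^p_{\phantom{p}cab}=0$. The ``double-$a$'' terms, in which two $a$-indices appear on the Riemann tensor, reorganise into $-2n(n-1)\,R^p_{\phantom{p}(a_1a_2|(r}X_{s)|a_3\cdots a_n)p}$ using only the antisymmetry of $R^p_{\phantom{p}\cdot\cdot\cdot}$ in its last two indices and the total symmetry of $X$. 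The $n=1$ specialisation, which reduces to the classical formula~(\ref{AffCurv}) for affine vectors, provides a convenient sanity check on signs and conventions.
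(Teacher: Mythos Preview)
Your argument is correct and follows essentially the same route as the paper's proof: use the affine condition to move the inner derivative index into the tensor slots, commute covariant derivatives via the Ricci identity, apply the affine condition a second time, and then reorganise the accumulated curvature terms (invoking the first Bianchi identity for the single-$a$ pieces). Your explicit $r\leftrightarrow s$ symmetrisation and the ``companion relation'' are a tidy repackaging of what the paper does by direct expansion; just note that after dividing by~$2$ you must also move the $\tfrac12[\nabla_r,\nabla_s]X$ term to the right and the $\tfrac{n(n-1)}{2}\nabla_{(a_1}\nabla_{a_2}X_{a_3\cdots a_n)rs}$ term to the left before the left-hand side literally matches the theorem's.
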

\begin{proof}
This is proved by repeatedly employing the definition of affine
tensor fields, the Ricci identities, and collecting terms into the proper
symmetrizations, for which it is useful to recall the identity
\begin{equation}
   T_{(a_1a_2\cdots a_n)}
   =\tfrac{1}{n}(T_{a_1(a_2\cdots a_n)}+\cdots +T_{a_n(a_1\cdots a_{n-1})}),
\end{equation}
valid for any tensor $T_{a_1a_2\cdots a_n}$.

Applying in succession the definition of affine tensor, the definition
of the curvature tensor, and again the definition of affine tensor
gives
\begin{align}
   \nabla_r\nabla_sX_{a_1\cdots a_n}=&\phantom{-}
   \nabla_{a_1}\nabla_sX_{ra_2\cdots a_n}+
   \nabla_{a_1}\nabla_{a_2}X_{sra_3\cdots a_n}+\cdots+\\\notag
     &  \nabla_{a_2}\nabla_{a_1}X_{a_3\cdots a_nrs}+\cdots+
     \nabla_{a_n}\nabla_sX_{a_1\cdots a_{n-1}r}-\\\notag
     & \{R^p_{\phantom{p}sa_1r}X_{pa_2\cdots
       a_n}+R^p_{\phantom{p}a_na_1r}X_{sa_2\cdots a_{n-1}p}+
     \cdots +R^p_{\phantom{p}sa_nr}X_{a_1\cdots a_{n-1}p}\}
\end{align}
Rearranging terms, applying the Ricci identities, the affine condition, the
Ricci identities once more, and organizing the curvature terms, gives
\begin{align}
    \nabla_r\nabla_sX_{a_1\cdots a_n}= & - \nabla_r\nabla_sX_{a_1\cdots a_n}
    +n(n-1)\nabla_{(a_1}\nabla_{a_2}X_{a_3\cdots a_n)(rs)}\\\notag    
     & +2nR^p_{\phantom{p}rs(a_1}X^{\phantom{p}}_{a_2\cdots a_n)p}-
      2n(n-1)R^p_{\phantom{p}(a_1a_2|(r}X^{\phantom{p}}_{s)|a_3\cdots a_n)p},
\end{align}
and the result follows.  Further details may be found in~\cite{sC09}.
\end{proof}

Theorem~\ref{thm1} generalizes a result of Col\-linson~\cite{cC71}, who
developed the following identity for Killing tensors.

\begin{corollary}
Let $T$ be a valence $2$ Killing tensor field.  Then
\begin{equation}
   \nabla_r\nabla_sT_{ab} - \nabla_b\nabla_aT_{rs}
	= 2R^p_{\phantom{p}rs(a}T^{\phantom{p}}_{b)p}
	  - 2R^p_{\phantom{p}ba(s}T^{\phantom{p}}_{r)p}
\end{equation}
\end{corollary}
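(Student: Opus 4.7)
The plan is to specialize Theorem~\ref{thm1} to the case $n=2$. This is legitimate because every Killing tensor is automatically an affine tensor: the vanishing of $\nabla_{(c}T_{ab)}$ forces any further derivative of that symmetrized expression to vanish as well, so the hypothesis of the theorem is met by any valence-2 Killing tensor $T_{ab}$.

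Setting $n=2$ in~(\ref{ATID1}) yields
\[
   \nabla_r\nabla_s T_{ab} - \nabla_{(a}\nabla_{b)} T_{rs}
   = 2R^p{}_{rs(a} T_{b)p} - 2R^p{}_{(ab|(r} T_{s)|)p}.
\]
The first term on each side already agrees with the corresponding term in the corollary. Thus the remaining work is to reconcile, on the left, the symmetrized second derivative $\nabla_{(a}\nabla_{b)}T_{rs}$ with the unsymmetrized $\nabla_b\nabla_a T_{rs}$ appearing in the statement, and, on the right, the doubly symmetrized curvature $R^p{}_{(ab|(r}T_{s)|)p}$ with the more compact $R^p{}_{ba(s}T_{r)p}$.

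The first reconciliation is immediate from the Ricci identity: the difference $\nabla_{(a}\nabla_{b)}T_{rs}-\nabla_b\nabla_a T_{rs}$ equals $\tfrac{1}{2}[\nabla_a,\nabla_b]T_{rs}$, which is a fixed combination of Riemann components contracted against $T$. Transferring this discrepancy to the right-hand side contributes two extra curvature terms that must be combined with those already present.

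The main obstacle is the second reconciliation, i.e.\ the index bookkeeping for the curvature. The plan is to expand $R^p{}_{(ab|(r}T_{s)|)p}$ into its explicit four-term sum, add in the two-term commutator correction from the previous step, and then apply the first Bianchi identity together with the antisymmetry of the Riemann tensor in its final pair of indices. Using also the symmetry $T_{pq}=T_{qp}$, the resulting expression should collapse to $2R^p{}_{ba(s}T_{r)p}$, giving the statement; the delicate part is tracking the factors of $\tfrac{1}{2}$ from the two nested symmetrizations and ensuring that the Bianchi-induced cancellations line up as required.
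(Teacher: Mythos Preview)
Your plan is correct and matches the paper's approach: the paper's proof is simply ``let $n=2$ in Theorem~\ref{thm1} and expand the symmetrizations,'' and what you outline is precisely the content of that expansion. You are right that passing from $\nabla_{(a}\nabla_{b)}T_{rs}$ and $R^p{}_{(ab)(r}T_{s)p}$ to the unsymmetrized $\nabla_b\nabla_a T_{rs}$ and $R^p{}_{ba(s}T_{r)p}$ is not a mere relabeling; it costs a commutator term $\tfrac{1}{2}[\nabla_a,\nabla_b]T_{rs}$ on the left, and the resulting curvature pieces combine with the right-hand side via the first Bianchi identity $R^p{}_{rba}+R^p{}_{bar}+R^p{}_{arb}=0$ (together with antisymmetry in the last pair) exactly as you anticipate---the paper simply suppresses this bookkeeping.
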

\begin{proof}
This follows by letting $n=2$ in Theorem~\ref{thm1}, and expanding the
symmetrizations.
\end{proof}

It should be remarked that even though Collinson proved this only for
Killing tensors, it is in fact true for all affine tensors.

\begin{corollary}
If $X$ is an affine tensor field in \MM, then 
\begin{equation}
\label{NoAff1}
   \partial_r\partial_sX_{a_1\cdots a_n}
	= \frac{n(n-1)}{2}\partial_{(a_1}\partial_{a_2}X_{a_3\cdots a_n)rs}.
\end{equation}
\end{corollary}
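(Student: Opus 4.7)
The plan is to obtain this as an immediate specialization of Theorem~\ref{thm1} to the flat case. The two ingredients we need are that \MM\ has vanishing Riemann curvature and that, in standard rectangular coordinates, the Christoffel symbols of the metric connection vanish identically, so $\nabla_a$ reduces to $\partial_a$ on every tensor field.

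First, I would fix standard rectangular coordinates on \MM, so that $\nabla_a = \partial_a$ throughout, and also $R^{p}{}_{rsq}=0$ everywhere. Then I would invoke Theorem~\ref{thm1} applied to the given affine tensor $X$. The two curvature terms on the right-hand side of~(\ref{ATID1}),
\[
   nR^p_{\phantom{p}rs(a_1}X^{\phantom{p}}_{a_2\cdots a_n)p}
   - n(n-1)R^p_{\phantom{p}(a_1a_2|(r}X^{\phantom{p}}_{s)|a_3\cdots a_n)p},
\]
vanish identically in \MM. What remains is
\[
   \nabla_r\nabla_sX_{a_1\cdots a_n}
   - \frac{n(n-1)}{2}\nabla_{(a_1}\nabla_{a_2}X_{a_3\cdots a_n)rs}=0,
\]
and replacing $\nabla$ by $\partial$ in the chosen coordinates gives exactly~(\ref{NoAff1}).

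Because this is a pure specialization, there is really no obstacle to overcome; the only point one has to be slightly careful about is that the identity in Theorem~\ref{thm1} is tensorial, so although the final equation~(\ref{NoAff1}) is written in terms of partial derivatives and is therefore coordinate-dependent in appearance, it is legitimate to derive it in rectangular coordinates where $\nabla$ and $\partial$ agree on all tensor components.
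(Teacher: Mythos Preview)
Your proposal is correct and is exactly the intended argument: the corollary is stated without proof in the paper precisely because it follows immediately from Theorem~\ref{thm1} by setting the curvature terms to zero and replacing $\nabla$ by $\partial$ in standard rectangular coordinates on \MM.
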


The converse of Theorem~\ref{thm1} is false.  For example, in \MM, in standard
rectangular coordinates, the tensor $T_{ab}=xyX_{(a}Y_{b)}$ with
$X=\partial_x$, $Y=\partial_y$ satisfies~(\ref{NoAff1}), but is not an affine
tensor.
\footnote{The claimed proof of the converse of Theorem~\ref{thm1}
in~\cite{sC09} is therefore incorrect.}

The following theorem gives an identity for affine tensor fields which
generalizes~(\ref{AffCurv}), and which does fully characterize affine
tensor fields.

\begin{theorem}
A tensor field $X_{a_1\cdots a_n}=X_{(a_1\cdots a_n)}$ is affine if and only
if
\begin{align}
\label{AffID2}
    \nabla_r\nabla_sX_{a_1\cdots a_n} & = \frac{2n}{n+1}R^p_{\phantom{p}rs(a_1}X^
{\phantom{}}_{a_2\cdots a_n)p}\\\notag
    & + \frac{n(n-1)}{n+1}(\nabla_{(a_1}\nabla_{a_2}X_{a_3\cdots a_n)sr}-\nabla_s
\nabla_{(a_1}X_{a_2\cdots a_n)r})\\\notag
    & +\frac{n(n-1)}{n+1}(R^p_{\phantom{p}(a_1|sr|}X^{\phantom{}}_{a_2\cdots a_n)
p}-2R^p_{\phantom{p}(a_1a_2|(r}X^{\phantom{}}_{s)|a_3\cdots a_n)p})\notag
\end{align}
\label{ATID2}
\end{theorem}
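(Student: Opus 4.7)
The plan is to establish the two directions of the biconditional separately.

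For the forward direction, I would begin with the affine condition $\nabla_b\nabla_{(c}X_{a_1\cdots a_n)}=0$, expand the inner symmetrization using the identity $(n+1)\nabla_{(c}X_{a_1\cdots a_n)}=\nabla_c X_{a_1\cdots a_n}+n\nabla_{(a_1}X_{a_2\cdots a_n)c}$, and specialize to $b=s$, $c=r$. Applying the Ricci identity to the resulting $\nabla_s\nabla_r X_{a_1\cdots a_n}$ then yields the auxiliary relation
\[
\nabla_r\nabla_s X_{a_1\cdots a_n}+n\nabla_s\nabla_{(a_1}X_{a_2\cdots a_n)r}=nR^p_{\phantom{p}(a_1|sr|}X_{a_2\cdots a_n)p}.
\]
The identity (\ref{AffID2}) is then obtained by adding twice Theorem~\ref{thm1} to $(n-1)$ times this auxiliary relation, after collecting terms.

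For the backward direction, assume (\ref{AffID2}). The strategy is to symmetrize the identity over the $n+1$ indices $(s,a_1,\ldots,a_n)$. The left-hand side becomes $\nabla_r\nabla_{(s}X_{a_1\cdots a_n)}$, precisely the quantity whose vanishing constitutes the affine condition with $b=r$. The curvature term $R^p_{\phantom{p}rs(a_1}X_{a_2\cdots a_n)p}$ vanishes immediately under this symmetrization because $R^p_{\phantom{p}rsa_1}$ is antisymmetric in $(s,a_1)$. The remaining symmetrized contributions from $B$, $E$, $F$, and $D$ must then be shown to cancel; this should follow from the first Bianchi identity applied to the curvature factors together with the Ricci identity applied to reorder derivatives, supplemented by the companion identity obtained from (\ref{AffID2}) by swapping $r$ and $s$, which furnishes the independent linear relation needed to close the argument.

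The main obstacle is this last cancellation in the backward direction: the careful multi-step algebraic verification that the symmetrized $B$, $E$, $F$, and $D$ contributions vanish in combination, requiring intricate index bookkeeping and repeated invocation of the Bianchi and Ricci identities.
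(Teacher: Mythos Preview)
Your forward direction matches the paper's argument exactly: the auxiliary relation you derive is precisely the paper's equation~(\ref{AffIDPart2}), and your linear combination (twice Theorem~\ref{thm1} plus $(n-1)$ times the auxiliary relation) is the paper's combination (Theorem~\ref{thm1} plus $\tfrac{n-1}{2}$ times~(\ref{AffIDPart2})) scaled by an overall factor of two.

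For the converse, your strategy of symmetrizing over $(s,a_1,\ldots,a_n)$ is again the paper's. But you overestimate what is needed to finish: no swapped companion identity and no Ricci reordering are required. Under full symmetrization over $(s,a_1,\ldots,a_n)$, the two derivative terms $\nabla_{(a_1}\nabla_{a_2}X_{a_3\cdots a_n)sr}$ and $\nabla_s\nabla_{(a_1}X_{a_2\cdots a_n)r}$ become \emph{identical} expressions (each places two of the $n+1$ symmetrized indices on covariant derivatives and the remaining $n-1$ on $X$ alongside the fixed index $r$), so their opposite signs make them cancel outright. For the remaining curvature terms, expand the $(r,s)$ symmetrization in the last term of~(\ref{AffID2}): the piece with $s$ sitting in the fourth Riemann slot vanishes after symmetrization by the first Bianchi identity $R^p{}_{(abc)}=0$, while the piece with $r$ in the fourth slot has the same index structure as $R^p{}_{(a_1|sr|}X_{a_2\cdots a_n)p}$ under the symmetrization and cancels it. Thus the entire right-hand side vanishes upon symmetrizing, and the affine condition follows in one step rather than via the two-equation manoeuvre you anticipate.
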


\begin{proof}
Suppose $X$ is an affine tensor.  By the Ricci identity and the symmetry 
of $X$,
\begin{equation}
   \nabla_r\nabla_sX_{a_1\cdots a_n}
   = \nabla_s\nabla_r X_{a_1\cdots a_n}
     + nR^p_{\phantom{p}(a_1|sr|}X^{\phantom{}}_{a_2\cdots a_n)p}
\end{equation}
By the affine condition,
\begin{equation}
\label{AffIDPart2}
   \nabla_r\nabla_sX_{a_1\cdots a_n}
   = -n\nabla_s\nabla_{(a_1}X_{a_2\cdots a_n)r}
     + nR^p_{\phantom{p}(a_1|sr|}X^{\phantom{}}_{a_2\cdots a_n)p}
\end{equation}
Equation~(\ref{AffID2}) follows from adding (\ref{ATID1}) with
$\tfrac{n-1}{2}$ times equation~(\ref{AffIDPart2}).  On the other hand, if a
symmetric tensor field $X$ satisfies~(\ref{AffID2}), then by symmetrizing on
all indices except $r$, it follows that it must be affine.
\end{proof}

We may recover the standard result~(\ref{AffCurv}) for affine vector fields
from Theorem~\ref{ATID2} by letting $n=1$ (and rearranging indices).
\begin{corollary}
$X$ is an affine vector field if and only if
\begin{equation}
   \nabla_r\nabla_sX_a=R^p_{\phantom{p}rsa}X_p.
\end{equation}
\end{corollary}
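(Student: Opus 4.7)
The plan is to specialize Theorem~\ref{ATID2} to $n=1$, which is in fact how the author has already signposted the argument. Setting $n=1$ in equation~(\ref{AffID2}), the prefactor $n(n-1)=0$ kills both the second line (the bracketed difference of mixed covariant derivatives) and the third line (the two curvature terms with $n(n-1)/(n+1)$ out front). The surviving leading coefficient is $\tfrac{2n}{n+1}=1$, and the symmetrization over the single free index $a_1=a$ is trivial, so the identity collapses directly to $\nabla_r\nabla_s X_a = R^p{}_{rsa}X_p$. This gives the forward direction immediately, with no calculation beyond checking that the coefficients drop as stated.

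For the converse, I would reuse the device from the last sentence of the proof of Theorem~\ref{ATID2}: symmetrize the assumed identity over all indices except $r$. With $n=1$ this means symmetrizing in $s$ and $a$, yielding
\[
   \nabla_r\nabla_{(s}X_{a)} = R^p{}_{r(sa)}X_p.
\]
The right-hand side vanishes because the Riemann tensor is antisymmetric in its last pair of indices, $R^p{}_{rsa}=-R^p{}_{ras}$, so $R^p{}_{r(sa)}=0$. Hence $\nabla_r\nabla_{(s}X_{a)}=0$, which is exactly Definition~\ref{Vecs}($iii$), and $X$ is an affine vector field.

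There is no real obstacle here; the corollary is a clean specialization. The only thing to be careful about is the bookkeeping on the index positions once $n=1$ is substituted, in particular confirming that the curvature term appearing in~(\ref{AffID2}) matches the $R^p{}_{rsa}X_p$ written in the statement after the parenthetical ``rearranging indices,'' and that the converse step genuinely needs only the algebraic antisymmetry of the Riemann tensor rather than any differential Bianchi-type identity.
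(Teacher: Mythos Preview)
Your proposal is correct and follows exactly the route the paper indicates: specialize Theorem~\ref{ATID2} to $n=1$, so that the $n(n-1)$ prefactors vanish and the leading coefficient becomes $1$, and for the converse symmetrize over $s$ and $a$ and use the antisymmetry $R^p{}_{r(sa)}=0$. This is precisely the ``letting $n=1$ (and rearranging indices)'' that the paper signposts, with the converse step inherited verbatim from the last sentence of the proof of Theorem~\ref{ATID2}.
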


\section{Homothetic Tensors}
\label{HomT}

\subsection{Definitions}
\label{HTDefs}

As mentioned above, there are several types of vectors which are of interest
in the study of spacetime, including Killing vectors, homothetic vectors, and
affine vectors.  Another important class of vector is that of \emph{conformal
Killing vectors}.  A conformal Killing vector gives a direction along which
the metric is rescaled.  Conformal Killing tensors are defined from conformal
Killing vectors in a manner analogous to the way Killing tensors were defined.
As with conformal Killing vectors, conformal Killing tensors generate
constants of the motion along null geodesics.

\begin{definition}
\label{CKs}
\hfill\break\\[-10pt]
\null\hspace{24pt}
\begin{minipage}{5in}
   i) $X$ \textup{is a conformal Killing vector iff }
   $\nabla_{(a}X_{b)}=\phi g_{ab}$;\\
\smallskip
   ii) $X$ \textup{is a conformal Killing tensor iff }
   $\nabla_{(c}X_{a_1\cdots a_n)}=g_{(ca_1}\phi_{a_2\cdots a_n)}$. 
\end{minipage}\\
where $\phi$ is a scalar field (not necessarily constant), and where the
tensor fields $\phi_{a_1\cdots a_{n-1}}$ and $X_{a_1\cdots a_n}$ are totally 
symmetric.
\end{definition}

The set of conformal Killing vector fields on a spacetime, denoted $\CON(M)$,
forms a Lie algebra, called the conformal algebra.  Clearly, $\KIL(M)$ and
$\HOM (M)$ form subalgebras of $\CON(M)$, and
$\KIL(M)\subset\HOM(M)\subset\CON(M) $.  However, in general
$\AFF(M)\neq\CON(M)$.  In fact, as Hall has pointed out~\cite {gH04},
$\HOM(M)=\AFF(M)\cap\CON(M)$.  By making a suitable definition of a valence
$n$ homothetic tensor, we will be able to generalize all of these
relationships.

\begin{definition}
A symmetric tensor $X_{a_1a_2\cdots a_n}=X_{(a_1a_2\cdots a_n)}$ is a
homothetic tensor if and only if
\begin{equation}
   \nabla_{(c}X_{a_1...a_n)}=g_{(ca_1}\lambda_{a_2...a_n)},
\label{HomTen1}
\end{equation}
for some covariantly constant tensor $\lambda_{a_1\cdots a_{n-1}}=\lambda_
{(a_1\cdots a_{n-1})}$.
\end{definition}

We note that Prince~\cite{gP83} has offered a definition of \emph{homothetic
Killing tensors} which should not be confused with the definition just given.
Prince is referring to tensors constructed on the evolution space, and the
homothetic factor is a Killing vector so that the homothetic nature of the
tensor only appears along geodesics.

Let $\KIL^n(M)$, $\HOM^n(M)$, $\AFF^n(M)$, and $\CON^n(M)$ respectively denote
the sets of valence $n$ Killing, homothetic, affine, and conformal Killing
tensor fields.  Leaving aside for a moment the question of whether any or all
of these sets enjoy some sort of Lie algebra structure, we can remark that
similar set relations obtain.

\begin{proposition}
\hfill\break
\null\hspace{24pt}
\begin{minipage}{5in}
   i) $\KIL^n(M) \subset \HOM^n(M) \subset \CON^n(M)$;\\\smallskip
   ii) $\KIL^n(M) \subset \HOM^n(M) \subset \AFF^n(M)$;\\\smallskip
   iii) $\HOM^n(M)=\CON^n(M)\cap \AFF^n(M)$.
\end{minipage}
\end{proposition}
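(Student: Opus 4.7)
My plan is to dispatch the chains of containments in (i) and (ii) by direct verification from the definitions, and then to prove (iii) by showing that the conformal factor of any tensor field which is simultaneously conformal Killing and affine must be covariantly constant.

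For part (i), $\KIL^n(M)\subset\HOM^n(M)$ is the special case $\lambda_{a_1\cdots a_{n-1}}\equiv 0$ in~(\ref{HomTen1})---the zero tensor is trivially covariantly constant---and $\HOM^n(M)\subset\CON^n(M)$ is immediate because a covariantly constant $\lambda$ is in particular a symmetric tensor field of the right valence, so it serves as the $\phi$ in Definition~\ref{CKs}($ii$). For part (ii), the first containment is just (i), and for $\HOM^n(M)\subset\AFF^n(M)$ I would apply $\nabla_b$ to the homothetic equation~(\ref{HomTen1}); since $g_{ab}$ and $\lambda_{a_1\cdots a_{n-1}}$ are both covariantly constant, the right-hand side vanishes identically, matching the affine condition~(\ref{AffTen1}).

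For part (iii), the forward inclusion $\HOM^n(M)\subset\CON^n(M)\cap\AFF^n(M)$ is immediate from (i) and (ii). For the reverse, suppose $X_{a_1\cdots a_n}$ is at once a conformal Killing tensor with factor $\phi_{a_1\cdots a_{n-1}}$ and an affine tensor. Applying $\nabla_b$ to the conformal Killing equation and using $\nabla g=0$ gives
\[
   \nabla_b\nabla_{(c}X_{a_1\cdots a_n)}
   = g_{(ca_1}\nabla_b\phi_{a_2\cdots a_n)},
\]
whose left side vanishes by the affine condition. To conclude that $\phi$ is covariantly constant, I would contract the resulting identity $g_{(ca_1}\nabla_b\phi_{a_2\cdots a_n)}=0$ with $g^{ca_1}$ and expand the fully symmetrized product over the $n+1$ indices $c,a_1,\ldots,a_n$. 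Each term either produces the trace $g^{ab}g_{ab}=4$ (when $g^{ca_1}$ contracts against an occurrence of $g$ in the expansion) or produces Kronecker deltas that merely rename the free indices of $\nabla_b\phi$. Combining these with the appropriate multinomial weights yields a strictly positive multiple of $\nabla_b\phi_{a_2\cdots a_n}$, so $\nabla_b\phi=0$ and $X$ is homothetic.

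The main obstacle is verifying that the combinatorial coefficient produced by this contraction truly is nonzero for every valence $n$. This reduces to the standard fact that the symmetric ``multiplication by $g$'' map $\psi_{a_1\cdots a_{n-1}}\mapsto g_{(ca_1}\psi_{a_2\cdots a_n)}$ has trivial kernel on symmetric tensor fields in dimension four. I would establish this by carrying out the index-by-index trace count explicitly---checking $n=1,2,3$ as consistency cases and then extracting the closed-form coefficient for general $n$ from the combinatorics of the $\binom{n+1}{2}$ possible placements of $g$ within the symmetrization. Once the coefficient is in hand and confirmed to be positive in four dimensions, the Proposition follows.
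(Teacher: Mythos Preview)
Your strategy for (i), (ii), and the forward inclusion in (iii) matches the paper exactly, and your idea for the reverse inclusion---reduce to showing that the map $\psi\mapsto g_{(ca_1}\psi_{a_2\cdots a_n)}$ has trivial kernel---is the right one and is essentially what the paper does. However, your execution plan for that last step contains an error. You assert that contracting $g_{(ca_1}\nabla_b\phi_{a_2\cdots a_n)}=0$ once with $g^{ca_1}$ produces only two kinds of terms: the dimension $4$ from $g^{ca_1}g_{ca_1}$, and Kronecker deltas that relabel indices of $\nabla_b\phi$. That is true for $n\le 2$, but for $n\ge 3$ there is a third kind of term you have overlooked: when the factor of $g$ in the symmetrized expansion sits on a pair $a_ia_j$ with $i,j\ge 2$, the contraction $g^{ca_1}$ hits $\nabla_b\phi$ itself and produces a \emph{trace} of $\nabla_b\phi$ multiplied by $g_{a_ia_j}$. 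For instance, at $n=3$ one gets
\[
8\,\nabla_b\phi_{a_2a_3}+g_{a_2a_3}\,g^{cd}\nabla_b\phi_{cd}=0,
\]
not a pure multiple of $\nabla_b\phi_{a_2a_3}$. So there is no single ``closed-form coefficient'' to extract; the single contraction yields a linear relation between $\nabla_b\phi$ and its traces.

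The paper handles exactly this obstruction by an iterative argument: first transvect the identity with the maximal number of copies of $g^{-1}$ to show the fully contracted form of $\nabla_b\phi$ vanishes, then repeat with one fewer copy each time, peeling off traces until $\nabla_b\phi$ itself is shown to vanish. Your plan becomes correct if you replace the single contraction by this descending induction on the number of traces; that is precisely the content of the ``standard fact'' you invoke about injectivity of symmetric multiplication by $g$, and it is how the paper proves it.
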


\begin{proof}
The proofs of (i) and (ii) are straightforward, and together they show that
$\HOM^n(M)\subset \CON^n(M)\cap \AFF^n(M)$.

To show that $\CON^n(M)\cap \AFF^n(M)\subset \HOM^n(M)$, choose $X\in
\CON^n(M)\cap \AFF^n(M)$.  Then there is a valence~$n-1$ symmetric
tensor $\lambda$ such that
\begin{equation}
   \nabla_c\nabla_{(a_1}X_{a_2\cdots
     a_{n+1})}=\nabla_c(g_{(a_1a_2}\lambda_{a_3\cdots a_{n+1})})=0
\end{equation}
This implies that
\begin{equation}
   g_{a_1a_2}\nabla_c\lambda_{a_3\cdots a_{n+1}}+\cdots
   +g_{a_{n+1}a_n}\nabla_c\lambda_{a_1\cdots a_{n-1}}=0
\label{LambContra}
\end{equation}

To prove the claim, we must show that $\nabla_c\lambda_{a_1\cdots a_{n-1}}=0$.
To do this, first show that all possible contractions of $\lambda$ are
covariantly constant.  This is achieved by first transvecting
(\ref{LambContra}) with $g^{a_1a_2}g^{a_3a_4}\cdots g^{a_{n-1}a_n}$ if $n$ is
even, or $g^{a_1a_2}g^{a_3a_4}\cdots g^{a_na_{n+1}}$ if $n$ is odd, showing
that the maximally contracted form of $\lambda$ is covariantly constant.  This
is repeated, each time transvecting (\ref{LambContra}) with one less factor of
$g$, until it is shown that all contractions of $\lambda$ are covariantly
constant.  Finally, transvecting (\ref{LambContra}) with $g^{a_na_{n+1}}$
gives $\nabla_c\lambda_{a_1\cdots a_{n-1}}=0$, which proves the claim.
\end{proof}

\begin{corollary}
$\HOM(M)=\AFF(M)\cap\CON(M)$.
\end{corollary}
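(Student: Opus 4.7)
The plan is to obtain the corollary as the valence-one specialization of the preceding Proposition. First I would verify that the generalized definitions reduce to the original vector notions when $n=1$. A valence-1 homothetic tensor satisfies $\nabla_{(c}X_{a)}=g_{ca}\lambda$ with $\lambda$ a covariantly constant scalar (hence a constant), which is exactly Definition~\ref{Vecs}($ii$); a valence-1 conformal Killing tensor satisfies $\nabla_{(c}X_{a)}=\phi g_{ca}$ for some scalar field $\phi$, matching Definition~\ref{CKs}($i$); and a valence-1 affine tensor satisfies $\nabla_b\nabla_{(c}X_{a)}=0$, which is Definition~\ref{Vecs}($iii$). Consequently $\HOM^1(M)=\HOM(M)$, $\CON^1(M)=\CON(M)$, and $\AFF^1(M)=\AFF(M)$.

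Then I would simply invoke part (iii) of the Proposition at $n=1$, which immediately yields $\HOM(M)=\CON(M)\cap\AFF(M)$.

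The one point worth checking is that the argument in the Proposition for the covariant constancy of $\lambda$ is not vacuous when $\lambda$ is a scalar. In that case equation~(\ref{LambContra}) degenerates to $g_{a_1 a_2}\nabla_c\lambda=0$, and transvecting with $g^{a_1 a_2}$ gives $\dim(M)\,\nabla_c\lambda=0$, forcing $\lambda$ to be constant. This is just the base case of the transvection scheme used in the general proof, so there is no obstacle; the real work has already been done in establishing the Proposition.
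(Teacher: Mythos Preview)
Your proposal is correct and matches the paper's intent: the corollary is stated immediately after the Proposition with no separate proof, so the paper is implicitly treating it as the $n=1$ specialization, exactly as you do. Your extra check that the transvection argument in~(\ref{LambContra}) survives when $\lambda$ is a scalar is a nice touch but not strictly needed, since the corollary follows from the Proposition's \emph{statement} rather than its proof.
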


\subsection{Examples}
\label{HTMMGG}

There are nontrivial examples of valence 2 homothetic tensor fields in
\MM\ and \GG.  In \MM, $(\mathbb{R}^4,\eta_{ab})$ in the usual rectangular
coordinates, the four tensors $t\eta_{ab}$, $x\eta_{ab}$, $y\eta_{ab}$, and
$z\eta_{ab}$ are independent homothetic tensors.  In \GG, the sum
$X_{ab}=S_{ab}+T_{ab}=zg_{ab}$ is homothetic, as is easy to see, since
\begin{equation}
   \nabla_{(a}X_{bc)}=g_{(bc}\nabla_{a)}z=z_{,(a}g_{bc)}=\zeta_{(a}g_{bc)}.
\end{equation}

Metrics which possess a valence~$m$ covariantly constant tensor and a
valence~$n$ proper homothetic tensor will posses a reducible homothetic tensor
of valence~$m+n$ given by the symmetric product of the two. Thus, for example,
some of the plane wave metrics studied in \cite{gH04} possess a reducible
homothetic tensor of valence~2.

\section{Geodesic Deviation}
\label{EGD}

In 1982, Caviglia \emph{et al.}~\cite{gCcZ82.2} showed that by repeatedly
contracting a Killing tensor with copies of a particular geodesic tangent
vector, one obtained a \emph{Jacobi field}, that is, a solution of the
equation of geodesic deviation.  They were able to obtain a partial converse,
essentially showing that if a tensor always yields a Jacobi field under this
construction, then it is an affine tensor, although they did not employ this
notion in their work.

The property that a Killing tensor generates a Jacobi field is in fact a
property it possesses because it is an affine tensor.  The following theorem,
together with the results in~\cite{gCcZ82.2}, shows that a tensor generates a
Jacobi field through this process if and only if it is an affine tensor.

\begin{theorem}
Suppose $X_{aa_1...a_p}$ is an affine tensor, and let 
\[
   \theta_a=X_{aa_1...a_p}V^{a_1}\cdots V^{a_p},
\]
for any geodesic tangent $V$.  Then $\theta^a$ satisfies the \emph{equation of
geodesic deviation}
\[
V^bV^c\nabla_b\nabla_c\theta_a
	+ R_{ab\phantom{c}d}^{\phantom{ab}c}V^b\theta_cV^d=0.
\]
\label{thm2}
\end{theorem}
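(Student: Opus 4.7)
The plan is to identify $V^bV^c\nabla_b\nabla_c\theta_a$ as a specific contraction of $\nabla\nabla X$ against $V$'s, then apply the affine condition in two complementary ways (taking $\nabla_b$ and then $\nabla_a$ as the ``outer'' derivative), and finally close via a single application of the Ricci identity.

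First, using the geodesic equation $V^b\nabla_b V^c=0$ to commute $V^b\nabla_b$ past each factor of $V$ in $\theta_a$, I would show
\[
   V^bV^c\nabla_b\nabla_c\theta_a
      = V^bV^cV^{a_1}\cdots V^{a_p}\nabla_b\nabla_c X_{aa_1\cdots a_p},
\]
which I call $A_a$; this is the LHS of the desired equation. Next I would contract the affine condition $\nabla_b\nabla_{(c}X_{aa_1\cdots a_p)}=0$ against $V^bV^cV^{a_1}\cdots V^{a_p}$, leaving $a$ free. Expanding the symmetrization over the $p+2$ indices $c,a,a_1,\ldots,a_p$ and using the total symmetry of $X$ together with that of the contracted $V$'s, the $p+1$ terms in which an index other than $a$ lands on the inner derivative each collapse to $A_a$, while the one remaining term -- the one with $\nabla_a$ in the middle -- contributes the distinct quantity
\[
   B_a := V^bV^cV^{a_1}\cdots V^{a_p}\nabla_b\nabla_a X_{ca_1\cdots a_p}.
\]
The identity $(p+1)A_a+B_a=0$ follows.

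The crucial step is to apply the affine condition a second time, now with $\nabla_a$ as the outer derivative: $\nabla_a\nabla_{(b}X_{ca_1\cdots a_p)}=0$, symmetrized over $b,c,a_1,\ldots,a_p$. Contracting against the same $V^bV^cV^{a_1}\cdots V^{a_p}$, every one of the $p+2$ terms in the expansion is structurally identical -- each has $\nabla_a$ outside, one $V$ on the inner derivative, and the other $p+1$ $V$'s contracted with the totally symmetric indices of $X$ -- so they all equal $C_a := V^bV^cV^{a_1}\cdots V^{a_p}\nabla_a\nabla_b X_{ca_1\cdots a_p}$, forcing $C_a=0$.

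Finally, I would use the Ricci identity to commute $\nabla_b$ and $\nabla_a$ inside $B_a$: each of the $p+1$ indices of $X$ produces one Riemann-curvature term, and under the $V$-contractions all $p+1$ collapse, by $X$'s total symmetry, into $(p+1)$ times a single contraction of the Riemann tensor with two factors of $V$ and one factor of $\theta$. Combining this with $C_a=0$ and $B_a=-(p+1)A_a$ then gives $A_a$ as such a contraction, and a short manipulation using the pair and antisymmetry properties of the Riemann tensor rewrites the result as exactly $-R_{ab\phantom{c}d}^{\phantom{ab}c}V^b\theta_cV^d$, which is the equation of geodesic deviation. The principal obstacle is the combinatorial bookkeeping of the symmetrizations (checking that each term really does collapse as claimed) together with reconciling the Ricci-identity sign convention with the index placement on the Riemann tensor in the theorem statement.
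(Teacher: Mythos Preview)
Your proposal is correct and follows essentially the same approach the paper sketches: use the geodesic equation to pass the $V$-contractions through the derivatives, apply the affine condition twice (once with each of the two free derivative slots as the ``outer'' one), and close with the Ricci identity, relying throughout on the fact that contraction with $V^{a_1}\cdots V^{a_p}$ automatically enforces the needed symmetrizations. The paper gives only this outline and defers details to~\cite{sC09}; your write-up fills them in cleanly, and the combinatorics and sign manipulations you flag as the main obstacles do indeed work out as you describe.
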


\begin{proof}
As with Theorem~\ref{thm1}, the result follows by repeatedly employing the
definitions of affine tensors and the curvature tensor, and appropriate index
symmetrization, in combination with the geodesic equations.  In this case, the
repeated contraction by a geodesic tangent induces the required
symmetrization.  For the complete proof, see~\cite{sC09}
\end{proof}

\section{Discussion}
\label{Discussion}

Geroch~\cite{rG70}, Sommers~\cite{pS73}, and Thompson~\cite{gT86} have
considered a Lie algebra structure on the set $\cup_n\KIL^n(M)$, using the the
Schouten bracket.  The Schouten bracket is a generalization of the standard
Lie bracket, defined on pairs of symmetric tensors $S$ and $T$ by
\begin{equation}
   [S,T]^{a_1\cdots  a_{m+n-1}}=\\
    mS^{r(a_1\cdots a_{m-1}}\nabla_rT^{a_m\cdots a_{m+n-1})}
	- nT^{r(a_1\cdots a_{n-1}}\nabla_rS^{a_n\cdots a_{m+n-1})}
\label{Schouten}
\end{equation}
The Schouten bracket is antisymmetric, linear in each slot, satisfies the
Jacobi identity, and reduces to the ordinary Lie bracket if $S$ and $T$ are
vector fields.  The Schouten bracket also satisfies
\begin{equation}
   [S,T\SYP V]=[S,T]\SYP V + [S,V]\SYP T
\label{SYMPRODBRACK}
\end{equation}
for any symmetric tensors $S$, $T$, $V$, where $T\SYP V$ represents the
symmetrized outer product of $T$ and $V$.

It is well known that the sets of Killing and conformal Killing tensors are
closed under the operation of taking symmetrized outer products.  Note,
however, that neither the set of all homothetic tensors, nor the set of all
affine tensors is closed under this operation.

Geroch~\cite{rG70} has pointed out that Killing tensors may be defined as
those symmetric tensors in a spacetime which commute with the metric $g$ under
this bracket; $K$ is a Killing tensor if and only if $[K,g]=0$.  It follows
from this and the Jacobi identity that $\cup_n\KIL^n(M)$ is closed
under~\eqref{Schouten}, and so forms a graded Lie algebra with respect to the
Schouten bracket.

Geroch remarks, additionally, that similar comments apply to the set of
conformal Killing tensors, where now the condition reads $[K,g]=g\tilde{K}$
for some symmetric tensor $\tilde{K}$ whose valence is one less than
that of~$K$.  Using this condition, the Jacobi identity,
and~\eqref{SYMPRODBRACK}, it can be seen that $\cup_n\CON^n(M)$ is closed
under~\eqref{Schouten}, and so also forms a graded Lie algebra with respect to
the Schouten bracket.

It is clear how to extend this structure to the set $\cup_n\HOM^n(M)$ of
homothetic tensors in a spacetime.  With regards to the set $\cup_n\AFF^n(M)$
of affine tensors, we pose the following conjecture.
\begin{conjecture}
The set $\cup_n\AFF^n(M)$ of all affine tensors in a spacetime is closed under
the Schouten bracket, and can therefore be given a graded Lie algebra
structure.
\end{conjecture}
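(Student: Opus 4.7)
The plan is to mirror Geroch's argument for the closure of $\cup_n\KIL^n(M)$: since a Killing tensor $K$ is characterized by $[K,g]=0$, the graded Jacobi identity forces $[[K_1,K_2],g]=0$ whenever $[K_i,g]=0$. The analogous starting observation for affine tensors is that $[X,g]_{a_1\cdots a_{n+1}}=-2\nabla_{(a_1}X_{a_2\cdots a_{n+1})}$, so Definition~\ref{AffTenDef} is precisely the statement that $[X,g]$ is covariantly constant. I would therefore try to show that if both $[X,g]$ and $[Y,g]$ are covariantly constant, then so is $[[X,Y],g]$, which by the same calculation translates back into $[X,Y]$ being affine.

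The next step is to invoke the graded Jacobi identity to write
\[
    [[X,Y],g] = \pm\,[X,[Y,g]]\;\pm\;[Y,[X,g]],
\]
reducing the problem to analyzing $\nabla_c[X,B]$ where $X$ is affine and $B$ is an arbitrary covariantly constant symmetric tensor. Since $\nabla B=0$, the Schouten bracket collapses to
\[
    [X,B]^{a_1\cdots a_{m+q-1}} = -q\,B^{r(a_1\cdots a_{q-1}}\nabla_r X^{a_q\cdots a_{m+q-1})},
\]
so differentiation produces the unsymmetrized second derivative $\nabla_c\nabla_r X$. Theorem~\ref{ATID2} is tailor-made to control exactly this object: it rewrites $\nabla_c\nabla_r X$ in terms of curvature contracted with $X$, symmetrized second derivatives that either vanish by the affine condition or are themselves covariantly constant, and commutator terms controlled by the Ricci identity. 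The goal is then to show that the curvature contributions from the two Jacobi terms arrange themselves -- with opposite signs and compatible symmetrizations -- so that they cancel identically.

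I would first sanity-check this pattern at low valence: the case $m=n=1$ must reproduce the known Lie algebra closure of $\AFF(M)$, while $(m,n)=(2,1)$ and $(m,n)=(2,2)$ should reveal the structure of the cancellation before one attempts the general case, and they can also be tested against the explicit valence-2 affine tensors in \MM\ and \GG\ listed in Section~\ref{ATMMGG}.

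The hard part will be that, unlike the Killing or conformal Killing characterizations (each of which lives purely inside the Schouten-bracket algebra), the affine condition $\nabla[X,g]=0$ couples the bracket to an external covariant derivative, so Jacobi by itself does not yield closure. Worse, the symmetrization built into the Schouten bracket does not align with the symmetrization appearing in Definition~\ref{AffTenDef}, so the affine hypothesis cannot be applied term-by-term. Reconciling these two symmetrization patterns -- most likely through a combined use of Theorem~\ref{ATID2}, the Ricci identity applied to the third derivatives that appear after differentiating $[X,B]$, and the curvature Bianchi identities -- is the technical obstruction that keeps the statement at the level of a conjecture.
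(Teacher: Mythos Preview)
There is nothing to compare against: the paper states this as a \emph{conjecture} and offers no proof. What you have written is not a proof either, and you say as much in your final paragraph. Your outline --- rewriting Definition~\ref{AffTenDef} as the condition that $[X,g]$ be covariantly constant, expanding $[[X,Y],g]$ via the graded Jacobi identity, and then attempting to control $\nabla_c[X,B]$ for covariantly constant $B$ using Theorem~\ref{ATID2} --- is a sensible line of attack, and your diagnosis of the obstruction is accurate: the affine condition couples the Schouten bracket to an external covariant derivative, so the Jacobi identity alone cannot close the argument the way it does for Killing or conformal Killing tensors. That is precisely why the authors leave the statement open. If you wish to pursue this, the low-valence checks you propose are the right first step, but you should be prepared for the possibility that the conjecture is false and that a counterexample, rather than a proof, is what the calculation ultimately produces.
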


Recall that, in terms of the standard Lie derivative, an affine vector $X$ is
defined by $\mathcal{L}_X\nabla=0$~\cite{gH04}.  A natural extension of this
definition to affine tensors, which would also be an extension of the above
considerations, would be something like the following: Under the Schouten
bracket, a symmetric tensor $X$ is affine if and only if $[X,\nabla]=0$.
Analysis of affine tensor fields along these lines may illuminate the
geometric role these tensor fields play.  This investigation is ongoing.

\newpage

%Another motivation for this study is the hope that by study of the existence
%and properties of affine tensors, some knowledge of Killing tensors may be
%found which may aid in determining them, since the set of Killing tensors on
%a spacetime will be seen to be a subset of the set of affine tensors.

%Clearly, the set of Killing tensors of valence $n$ forms a subset of the set
%of affine tensors of valence $n$.

% \begin{remark}
% Prince (Citation Here) has offered a definition of \emph{homothetic Killing
% tensors} which should not be confused with the definition just given. Prince
% is referring to tensors constructed on the evolution space, and the
% homothetic factor is a Killing vector so that the homothetic nature of the
% tensor only appears along geodesics. One of the advantages of the definition
% here is that the constancy of the homothetic factor holds generally on the
% manifold.
% \end{remark}

%The sum $X_{ab}=S_{ab}+T_{ab}=zg_{ab}$ is homothetic, as is easy to see:
%\begin{equation}
   %\nabla_{(a}X_{bc)}=g_{(bc}\nabla_{a)}z=z_{,(a}g_{bc)}=\zeta_{(a}g_{bc)},
%\end{equation}
%where $\zeta_a=z_{,a}$ is covariantly constant in \GS Universe.

\end{document}